\newtheorem{theorem}{Theorem}[section]
\newtheorem{lemma}[theorem]{Lemma}
\newtheorem{corollary}[theorem]{Corollary}
\theoremstyle{definition}
\theoremstyle{remark}
\numberwithin{equation}{section} \errorcontextlines=0
\begin{document}

\title{On genuine entanglement for tripartite systems}

\author{Hui Zhao}
\address{School of Mathematics, Faculty of Science, Beijing University of Technology, Beijing 100124, China}
\email{zhaohui@bjut.edu.cn}
\author{Lin Liu}
\address{School of Mathematics, Faculty of Science, Beijing University of Technology, Beijing 100124, China}
\email{17865515630@163.com}
\author{Zhi-Xi Wang}
\address{School of Mathematical Sciences, Capital Normal University, Beijing 100048, China}
\email{wangzhx@cnu.edu.cn}
\author{Naihuan Jing}
\address{Department of Mathematics, North Carolina State University, Raleigh, NC 27695, USA}
\email{jing@ncsu.edu}
\author{Jing Li}
\address{Interdisciplinary Research Institute, Faculty of Science, Beijing University of Technology, Beijing
100124, China}
\email{lijing@bjut.edu.cn}

\maketitle

\begin{abstract}
We investigate the genuine entanglement in tripartite systems based on partial transposition and the norm of correlation tensors of the density matrices. We first derive an analytical sufficient criterion to detect genuine entanglement of tripartite qubit quantum states combining with the partial transposition of the density matrices. Then we use the norm of correlation tensors to study genuine entanglement for tripartite qudit quantum states and obtain a genuine entanglement criterion by constructing certain matrices. With detailed examples our results are seen to be able to detect more genuine tripartite entangled states than previous studies.
\end{abstract}

\keywords{Keywords: genuine entanglement; partial transposition; correlation tensor; Frobenius norm.}



\section{Introduction}	

Quantum entanglement is an important resource for quantum information processing with wide applications in quantum computing, secure communication and channel protocols.$^{1-4}$ Genuine multipartite entanglement (GME) is one of the important entanglement types that offers significant advantage in quantum tasks compared with bipartite ones.$^{5}$

It is known that any multipartite pure state can be written as a tensor product $|\varphi\rangle\langle\varphi|=|\varphi_{A}\rangle\langle\varphi_{A}\rangle\otimes|\varphi_{\bar{A}}\rangle\langle\varphi_{\bar{A}}|$ with respect to some bipartition $A\bar{A}$ ($A$ denoting a subset of subsystems and $\bar{A}$ its complement) is called biseparable. Any mixed state that can be decomposed into a convex sum of biseparable pure states is called biseparable. Consequently, any non-biseparable state is called genuinely multipartite entangled (GME). Let $H_i^d$, $i=1,2,3$, denote $d$-dimensional Hilbert spaces. A tripartite state $\rho$ over $H_{1}^{d}\otimes H_{2}^{d}\otimes H_{3}^{d}$ can be expressed as $\rho = \sum {{p_\alpha }} \left| {{\psi _\alpha }} \right\rangle \left\langle{{\psi_\alpha }} \right|$, where $0<p_\alpha\leq 1$, $\sum {{p_\alpha }}  = 1$, $\left| {{\psi _\alpha }} \right\rangle \in H_{1}^{d} \otimes H_{2}^{d}\otimes H_{3}^{d}$ are normalized pure states. If all $|\psi_{\alpha}\rangle$ are biseparable, namely, either $|\psi_{\alpha}\rangle=|\varphi_{\alpha}^{1}\rangle\otimes|\varphi_{\alpha}^{23}\rangle$ or $|\psi_{\beta}\rangle=|\varphi_{\beta}^{2}\rangle\otimes|\varphi_{\beta}^{13}\rangle$ or $|\psi_{\gamma}\rangle=|\varphi_{\gamma}^{3}\rangle\otimes|\varphi_{\gamma}^{12}\rangle$, where $|\varphi^1_\alpha\rangle$, $|\varphi^2_\beta\rangle$ and $|\varphi^3_\gamma\rangle$ denote pure states in $H_{1}^{d}$, $H_{2}^{d}$ and $H_{3}^{d}$, $|\varphi_{\alpha}^{23}\rangle$, $|\varphi_{\beta}^{13}\rangle$ and $|\varphi_{\gamma}^{12}\rangle$ denote pure states in $H_{2}^{d}\otimes H_{3}^{d}$, $H_{1}^{d}\otimes H_{3}^{d}$ and $H_{1}^{d}\otimes H_{2}^{d}$ respectively, then $\rho$ is said to be bipartite separable. Otherwise, $\rho$ is called genuine tripartite entangled. Therefore, a multipartite quantum state is not separable with respect to any bipartition will be called GME.$^{6}$

GME was studied in terms of linear and non-linear entanglement wit\-nesses.$^{7-8}$ Measurement of GME
based on concurrence and its lower bound was studied.$^{9}$
GME criteria for states of generic dimension using lower bounds of concurrence were obtained.$^{10}$ Sufficient conditions for detecting genuine tripartite entanglement and lower bounds of concurrence were presented.$^{11}$ Li et. al. investigated the genuine multipartite entanglement in terms of the norm of correlation tensors and multipartite concurrence.$^{12}$ The correlation tensors of quantum states were used to
 give a necessary condition for separability of multipartite quantum states.$^{13}$ Vicente and Huber have developed a general framework to
 use the same piece of information to detect both entanglement and GME for multipartite states of arbitrary dimension by using correlation tensors.$^{14}$ Some necessary conditions of separability and its generalized form in terms of the norms of correlation tensors out of
 density matrices were presented.$^{15}$ Also through correlation tensors, successful detection of multipartite entanglement for higher number of systems was presented.$^{16-17}$ The relation between the norm of correlation tensors and concurrence for tripartite qudit quantum systems have been established,$^{18}$ wherein an effective lower bound for concurrence and GME criteria were presented. The norm of the Bloch vectors for any quantum state with the number of subsystems less than or equal to four has been studied.$^{19}$ Despite of all these, detecting GME for more situations remains to be a priority for quantum computation due to its complexity and importance.

In this paper we present criteria of GME for tripartite quantum states based on the partial transpose and the norm of
correlation tensors of quantum states.$^{20}$ We construct certain matrices out of the density matrix in terms of bipartition
to give lower bounds for their norms, which enable us to derive new criteria for GME. We have compared our results with some
commonly used GME criteria and find that our results are more effective in several cases.

This paper is organized as follows. In Section 2, we obtain an analytical sufficient criterion for detecting the genuine entanglement of tripartite qubit quantum states by using partial transposition of the density matrix. By a detailed example, our results are seen to outperform some previously available results. In Section 3, we study genuine entanglement for tripartite qudit quantum systems and present a new sufficient condition of GME for multipartite quantum states based on the norm of correlation tensors by constructing certain matrices. We also give an example to show that our criterion detects more GME than some commonly used results. Comments and conclusions are given in Section 4.

\section{Detection of GME for tripartite quantum states}

We first consider the GME for tripartite qubit quantum states. Let $H_{j}^{d}$ $(j=1,2,3)$ be $d$-dimensional Hilbert spaces.
For our purpose, a density matrix is expanded in terms of the identity operator $I_{d}\in \mathbb C^{d\times d}$ and the traceless Hermitian generators $\lambda_{i_{j}}^{(j)}$ $(j=1,2,3$; $i_{j}=1,2,\cdots,d^{2}-1)$ of the Lie algebra $\mathfrak{su}(d)$. The generators can be easily constructed from any orthonormal basis $\{|a\rangle\}_{a=0}^{d-1}$ in $H_j^d$. 
Let $l,j,k$ be indices such that $0\leq l\leq d-2$ and $0\leq j<k\leq d-1$. Then, when $i_{j}=1,\cdots,d-1$, $\lambda_{i_{j}}^{(j)}=\sqrt{\frac{2}{(l+1)(l+2)}}(\sum\limits_{a=0}^{l}|a\rangle\langle a|-(l+1)|l+1\rangle\langle l+1|)$, when $i_{j}= d,\cdots,(d + 2)(d- 1)/2$, $\lambda_{i_{j}}^{(j)}=|j\rangle\langle k|+|k\rangle\langle j|$, and when $i_{j}= d(d + 1)/2,\cdots,d^{2}-1$, $\lambda_{i_{j}}^{(j)}=-i(|j\rangle\langle k|-|k\rangle\langle j|)$. Then a tripartite qubit quantum state $\rho$ on $H_{1}^{2}\otimes H_{2}^{2}\otimes H_{3}^{2}$ can be expressed in the following form:
\begin{eqnarray}
~~~~~~\rho=\frac{1}{8}I\otimes I\otimes I +\frac{1}{8}(\sum^{3}_{i_{1}=1}t_{i_{1}}^{1}\lambda_{i_{1}}^{(1)}\otimes I\otimes I+\sum^{3}_{i_{2}=1}t_{i_{2}}^{2}I \otimes\lambda_{i_{2}}^{(2)}\otimes I~~~~~~~~~~~~~~~~~~~~~~~~~~~~~~~~~~~~~~~~~~~~~~~~~~~~~~~~~~~~\nonumber\\+\sum^{3}_{i_{3}=1}t_{i_{3}}^{3}I\otimes I\otimes\lambda_{i_{3}}^{(3)})+\frac{1}{8}(\sum^{3}_{i_{1}=1}\sum^{3}_{i_{2}=1}t_{i_{1}i_{2}}^{12}\lambda_{i_{1}}^{(1)}\otimes\lambda_{i_{2}}^{(2)}\otimes I~~~~~~~~~~~~~~~~~~~~~~~~~~~~~~~~~~~~~~~~~~~~~~~~~~~~~~~~~~~~~~~\nonumber\\+\sum^{3}_{i_{1}=1}\sum^{3}_{i_{3}=1}t_{i_{1}i_{3}}^{13}\lambda_{i_{1}}^{(1)}\otimes I\otimes\lambda_{i_{3}}^{(3)}+\sum^{3}_{i_{2}=1}\sum^{3}_{i_{3}=1}t_{i_{2}i_{3}}^{23}I\otimes\lambda_{i_{2}}^{(2)}\otimes \lambda_{i_{3}}^{(3)})~~~~~~~~~~~~~~~~~~~~~~~~~~~~~~~~~~~~~~~~~~~~~~~~~~~~~~~~\nonumber\\
+\frac{1}{8}\sum^{3}_{i_{1}=1}\sum^{3}_{i_{2}=1}\sum^{3}_{i_{3}=1}t_{i_{1}i_{2}i_{3}}^{123}\lambda_{i_{1}}^{(1)}\otimes\lambda_{i_{2}}^{(2)}\otimes\lambda_{i_{3}}^{(3)},~~~~~~~~~~~~~~~~~~~~~~~~~~~~~~~~~~~~~~(1)~~~~~~~~~~~~~~~~~~~~~~~~~~~~~~~~~~~~~~~
\end{eqnarray}
where $\lambda_{i_{j}}^{(j)}$ represents the operator $\lambda_{i_{j}}$ acting on $H_{j}$, $t_{i_{1}}^{1}={\rm tr}(\rho\lambda_{i_{1}}^{(1)}\otimes I\otimes I)$, $\cdots$, $t_{i _{1}i_{2}}^{12}={\rm tr}(\rho\lambda_{i_{1}}^{(1)}\otimes\lambda_{i_{2}}^{(2)}\otimes I)$, $\cdots$,$t_{i_{1}i_{2}i_{3}}^{123}={\rm tr}(\rho\lambda_{i_{1}}^{(1)}\otimes\lambda_{i_{2}}^{(2)}\otimes\lambda_{i_{3}}^{(3)})$. Let $T^{(1)},\cdots$, $T^{(12)}$, $\cdots$, $T^{(123)}$ be the correlation tensors with entries 
$t_{i_{1}}^{1},\cdots,t_{i _{1}i_{2}}^{12},\cdots,t_{i_{1}i_{2}i_{3}}^{123}$, respectively. We equip the tensors with the Hilbert-Schmidt or Frobenius norm $\|\cdot\|$, then $\|T^{(1)}\|^{2}=\sum\limits_{i_{1}=1}^{3}(t_{i_{1}}^{1})^{2}, \cdots$, $\|T^{(12)}\|^{2}=\sum\limits_{i_{1}=1}^{3}\sum\limits_{i_{2}=1}^{3}(t_{i_{1}i_{2}}^{12})^{2},\cdots$, $\|T^{(123)}\|^{2}=\sum\limits_{i_{1}=1}^{3}\sum\limits_{i_{2}=1}^{3}\sum\limits_{i_{3}=1}^{3}(t_{i_{1}i_{2}i_{3}}^{123})^{2}$,
and for a matrix $B\in\mathbb C^{m\times n}$ we also use the trace norm $\|\cdot\|_{tr}$, which is defined as the sum of the singular values,
i.e., $\|B\|_{tr}=\sum\limits_{i}\sigma_{i}={\rm tr}\sqrt{B^{\dag}B}$, where $\sigma_{i}$, $i=1,\cdots, {\rm{min}}(m,n)$, are the singular values of $B$ arranged in descending order. Define that $M(\rho)=\frac{1}{3 }(\|\rho_{1|23}-\rho_{1|23}^{T_{2}}\|_{tr}+\|\rho_{2|13}-\rho_{2|13}^{T_{1}}\|_{tr}+\|\rho_{3|12}-\rho_{3|12}^{T_{1}}\|_{tr})$, where $T_{1}$ is the partial transposition over the first subsystem, $T_{2}$ is the partial transposition over the second  subsystem and $\rho_{i|jk}$ stand  for the bipartite density matrices with respect to subsystem $i$ and $jk$, $j\neq k\neq l\in\{1,2,3\}$.


\begin{lemma}
\label{lemma1}
For any bipartite separable quantum pure state $\rho \in H_{1}^{2}\otimes H_{2}^{2}\otimes H_{3}^{2}$, we denote the bipartitions as follows: $f|gh$, $f\neq g\neq h\in\{1,2,3\}$ and $g<h$. We have $\|\rho_{f|gh}-\rho_{f|gh}^{T_{g}}\|_{tr}\leq \sqrt{3}$.
\end{lemma}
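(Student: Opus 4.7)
The plan is to reduce to a two-qubit calculation via the product structure, then apply the Schmidt decomposition. Since $\rho$ is a pure state bipartite-separable across the $f|gh$ cut, I can write $\rho = |\varphi_f\rangle\langle\varphi_f|\otimes|\varphi_{gh}\rangle\langle\varphi_{gh}|$ with $|\varphi_f\rangle\in H_f^2$ and $|\varphi_{gh}\rangle\in H_g^2\otimes H_h^2$. Because $T_g$ acts only on subsystem $g$, which sits inside the second tensor factor,
\begin{equation*}
\rho_{f|gh}-\rho_{f|gh}^{T_g}=|\varphi_f\rangle\langle\varphi_f|\otimes(\sigma-\sigma^{T_g}),\qquad \sigma:=|\varphi_{gh}\rangle\langle\varphi_{gh}|.
\end{equation*}
By multiplicativity of the trace norm under tensor products together with $\||\varphi_f\rangle\langle\varphi_f|\|_{tr}=1$, the task reduces to bounding $\|\sigma-\sigma^{T_g}\|_{tr}$ for a generic pure two-qubit state.

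Next, I would Schmidt-decompose $|\varphi_{gh}\rangle=\sqrt{\mu_0}\,|e_0 f_0\rangle+\sqrt{\mu_1}\,|e_1 f_1\rangle$ with $\mu_0+\mu_1=1$, and compute $\sigma-\sigma^{T_g}$ in the product basis. The resulting matrix splits into two uncoupled $2\times 2$ off-diagonal blocks, on $\mathrm{span}\{|e_0 f_0\rangle,|e_1 f_1\rangle\}$ and on $\mathrm{span}\{|e_0 f_1\rangle,|e_1 f_0\rangle\}$, each with off-diagonal entries $\pm\sqrt{\mu_0\mu_1}$. Equivalently, in the generator expansion of $\sigma$ only the terms containing $\lambda_3^{(g)}$ survive the difference; one can then factor $\sigma-\sigma^{T_g}=\frac{1}{2}\lambda_3^{(g)}\otimes N$, where the $2\times 2$ operator $N=t^g_2 I_h+\sum_j t^{gh}_{2j}\lambda_j^{(h)}$ has trace norm $2\max(|t^g_2|,(\sum_j(t^{gh}_{2j})^2)^{1/2})$, read off directly from the correlations of $\sigma$.

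Finally, I would convert this closed-form expression into the numerical bound $\sqrt{3}$. The two-qubit purity identity $|\vec b|^2+|\vec c|^2+\|T\|^2=3$ for the Pauli correlations of $\sigma$ constrains the surviving coefficients; combining it with a Cauchy--Schwarz estimate of the form $\|A\|_{tr}\leq\sqrt{\mathrm{rank}(A)}\,\|A\|_F$ applied to the extracted matrix should produce the desired constant. The hard part will be engineering this last step so that the purity constant $3$ passes cleanly into the trace-norm bound: the naive Schmidt-plus-AM-GM route already delivers the explicit value $\|\sigma-\sigma^{T_g}\|_{tr}=4\sqrt{\mu_0\mu_1}\leq 2$, so the purity identity has to be injected at just the right stage for $\sqrt{3}$ to surface as the final upper estimate rather than the loose constant $2$ that a direct computation yields.
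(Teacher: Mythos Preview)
Your reduction to the two-qubit factor and the factorization $\sigma-\sigma^{T_g}=\tfrac12\lambda_Y^{(g)}\otimes N$ with $N=t_Y^g\,I+\sum_j t^{gh}_{Yj}\lambda_j^{(h)}$ is exactly the paper's route. In the paper's ordering the antisymmetric generator is $\lambda_3$ (not $\lambda_2$), and its equation~(2) records the trace norm in terms of $t_3^g$ and $|\vec v|=\bigl(\sum_j(t^{gh}_{3j})^2\bigr)^{1/2}$; it then splits into the two cases $t_3^g\gtrless|\vec v|$ that your $2\max(\cdot,\cdot)$ formula encodes, bounding the first via the purity relation $\sum_i (t^g_i)^2\le 3$ and the second via $\|T^{(gh)}\|\le\sqrt3$.

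The obstruction you flag in your last paragraph is genuine and cannot be removed: the bound $\sqrt3$ is false. For $|\varphi_f\rangle=|0\rangle$ and $|\varphi_{gh}\rangle=|{+}_y\rangle_g\otimes|0\rangle_h$ one has $\rho-\rho^{T_g}=|0\rangle\langle 0|\otimes\sigma_y\otimes|0\rangle\langle0|$, whose trace norm is $2>\sqrt3$. Your expression $\|\sigma-\sigma^{T_g}\|_{tr}=\|N\|_{tr}=2\max(|t_Y^g|,|\vec v|)$ is correct and the resulting bound $2$ is sharp. The paper reaches $\sqrt3$ only because equation~(2) carries a spurious prefactor $\tfrac12$: the nonzero singular values of $\rho-\rho^{T_g}$ are $\tfrac12|t_3^g\pm|\vec v\,||$, each occurring twice, so they sum to $|t_3^g+|\vec v\,||+|t_3^g-|\vec v\,||$, not half that. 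There is thus no way to ``inject the purity identity'' so as to convert your honest $2$ into $\sqrt3$. (Side remark: the Schmidt value $4\sqrt{\mu_0\mu_1}$ you quote holds only when $T_g$ is taken in the Schmidt basis of $|\varphi_{gh}\rangle$; for a fixed computational basis the correlation-tensor formula is the one to use.)
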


\begin{proof}
For a bipartite separable pure state $|\psi_{\beta}\rangle=|\varphi_{\beta}^{f}\rangle\otimes|\varphi_{\beta}^{gh}\rangle$, using $|\varphi_{\beta}^{f}\rangle\langle\varphi_{\beta}^{f}|=\frac{1}{2}I+\frac{1}{2}\sum\limits_{i_{f}=1}^{3}t_{i_{f}}^{f}\lambda_{i_{f}}^{(f)}$ and $|\varphi_{\beta}^{gh}\rangle\langle\varphi_{\beta}^{gh}|=\frac{1}{4}I\otimes I+\frac{1}{4}(\sum\limits_{i_{g}=1}^{3}t_{i_{g}}^{g}\lambda_{i_{g}}^{(g)}\otimes I+\sum\limits_{i_{h}=1}^{3}t_{i_{h}}^{h}I \otimes\lambda_{i_{h}}^{(h)})+\frac{1}{4}\sum\limits_{i_{g}=1}^{3}\sum\limits_{i_{h}=1}^{3}t_{i_{g}i_{h}}^{gh}\lambda_{i_{g}}^{(g)}\otimes\lambda_{i_{h}}^{(h)}$ , we have\\
\begin{eqnarray}
\||\psi_{\beta}\rangle\langle\psi_{\beta}|-(|\psi_{\beta}\rangle\langle\psi_{\beta}|)^{T_{g}}\|_{tr}~~~~~~~~~~~~~~~~~~~~~~~~~~~~~~~~~~~~~~~~~~~~~~~~~~~~~~~~~~~~~~~~~~~~~
~\nonumber\\=\frac{1}{2}\left[t_{3}^{g}+\sqrt{(t_{31}^{gh})^{2}+(t_{32}^{gh})^{2}+(t_{33}^{gh})^{2}}+\sqrt{(t_{3}^{g}-\sqrt{(t_{31}^{gh})^{2}+(t_{32}^{gh})^{2}+(t_{33}^{gh})^{2}})^{2}}\ \right].~~~~~~~~~~~~
\end{eqnarray}
In Equation (2), if $t_{3}^{g}\geq\sqrt{(t_{31}^{gh})^{2}+(t_{32}^{gh})^{2}+(t_{33}^{gh})^{2}}$, by using $tr[(|\varphi_{\beta}^{f}\rangle\langle\varphi_{\beta}^{f}|)^{2}]=tr[(|\varphi_{\beta}^{gh}\rangle\langle\varphi_{\beta}^{gh}|)^{2}]$, we get $\sum\limits_{i_{g}=1}^{3}f(t_{i_{g}}^{g})^{2}+\sum\limits_{i_{h}=1}^{3}(t_{i_{h}}^{3})^{2}+\sum\limits_{i_{g}=1}^{3}\sum\limits_{i_{h}=1}^{3}(t_{i_{g}i_{h}}^{gh})^{2}=3$, thus $\sum\limits_{i_{g}=1}^{3}(t_{i_{g}}^{g})^{2}\leq 3$, then
\begin{eqnarray}
\||\psi_{\beta}\rangle\langle\psi_{\beta}|-(|\psi_{\beta}\rangle\langle\psi_{\beta}|)^{T_{g}}\|_{tr}=\frac{1}{2}\times(2 t_{3}^{g})\leq\sum\limits_{i_{g}=1}^{3}t_{i_{g}}^{g}\leq\sqrt{3};
\end{eqnarray}
if $t_{3}^{g}<\sqrt{(t_{31}^{gh})^{2}+(t_{32}^{gh})^{2}+(t_{33}^{gh})^{2}}$, by using $\|T^{(ij)}\|^{2}\leq\frac{4(d^{2}-1)}{d^{2}}(i,j=1,2,3)$ in Ref. 21, we have
\begin{eqnarray}
\||\psi_{\beta}\rangle\langle\psi_{\beta}|-(|\psi_{\beta}\rangle\langle\psi_{\beta}|)^{T_{g}}\|_{tr}=\sqrt{(t_{31}^{gh})^{2}+(t_{32}^{gh})^{2}+(t_{33}^{gh})^{2}}\leq \|T^{(gh)}\|\leq\sqrt{3},
\end{eqnarray}
therefore $\||\psi_{\beta}\rangle\langle\psi_{\beta}|-(|\psi_{\beta}\rangle\langle\psi_{\beta}|)^{T_{g}}\|_{tr}\leq \sqrt{3}$.
\end{proof}

The following result follows from Lemma 1.

\begin{theorem}\label{theo1}
For any tripartite qubit quantum state $\rho$ over $H_{1}^{2}\otimes H_{2}^{2}\otimes H_{3}^{2}$, if $\rho$ is bipartite separable, then the inequality $M(\rho)\leq\frac{1}{3}(\sqrt{3}+\sqrt{3}+\sqrt{3})=\sqrt{3}$ holds. Thus if $M(\rho)>\sqrt{3}$, $\rho$ is a genuinely entangled tripartite state. 
\end{theorem}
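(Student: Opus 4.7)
The plan is to lift the pure-state bound of Lemma~\ref{lemma1} to mixed biseparable states by convexity, and then read off the GME criterion as its contrapositive. First, expand a given biseparable $\rho$ as $\rho=\sum_\alpha p_\alpha|\psi_\alpha\rangle\langle\psi_\alpha|$, with each $|\psi_\alpha\rangle$ biseparable across some bipartition of $\{1,2,3\}$. Since $\rho\mapsto\rho_{f|gh}$ is just a relabeling of tensor factors and $\rho\mapsto\rho^{T_g}$ is linear, the trace-norm triangle inequality yields, for each of the three bipartitions $f|gh$ appearing in $M(\rho)$,
\begin{equation*}
\|\rho_{f|gh}-\rho_{f|gh}^{T_g}\|_{tr}\le\sum_\alpha p_\alpha\,\||\psi_\alpha\rangle\langle\psi_\alpha|-(|\psi_\alpha\rangle\langle\psi_\alpha|)^{T_g}\|_{tr}.
\end{equation*}

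Applying Lemma~\ref{lemma1} to each summand on the right and using $\sum_\alpha p_\alpha=1$ bounds each of the three trace-norm quantities entering $M(\rho)$ by $\sqrt 3$. Averaging the three bounds then gives $M(\rho)\le\frac13(\sqrt3+\sqrt3+\sqrt3)=\sqrt 3$, and the contrapositive is exactly the advertised GME criterion: $M(\rho)>\sqrt 3$ forces $\rho$ to be non-biseparable, hence genuinely tripartite entangled.

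The point I would scrutinize most carefully is whether Lemma~\ref{lemma1} is truly available \emph{uniformly}, i.e.\ whether its $\sqrt 3$ bound holds for every biseparable $|\psi_\alpha\rangle$ paired with every $g\in\{1,2,3\}$ appearing in $M(\rho)$, regardless of which bipartition $|\psi_\alpha\rangle$ factorizes across. As stated, Lemma~\ref{lemma1} is phrased to hold for any of the three bipartitions $f|gh$ and any biseparable pure $\rho$, so the step above collapses into a single line of convexity. If instead only the ``matched'' case were treated in Lemma~\ref{lemma1}, one would additionally need to estimate the cross-bipartition contributions, for example bounding $\||\varphi^g\rangle\langle\varphi^g|-(|\varphi^g\rangle\langle\varphi^g|)^T\|_{tr}$ in terms of the local Bloch-vector component and then using the constraint $|\vec t^{(g)}|\le 1$ to recover the same $\sqrt 3$ bound; this is the only spot where the argument could become genuinely technical.
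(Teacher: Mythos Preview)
Your proposal is correct and follows essentially the same approach as the paper: expand the biseparable $\rho$ into a convex combination of biseparable pure states, apply the triangle inequality for the trace norm, and invoke Lemma~\ref{lemma1} termwise to obtain $M(\rho)\le\sqrt 3$. Your explicit flag about needing Lemma~\ref{lemma1} in the ``unmatched'' cross-bipartition cases is in fact more careful than the paper's own argument, which silently writes every summand in the matched product form $\rho_f\otimes\rho_{gh}$ without addressing this point.
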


\begin{proof}
Assume that $\left| {{\psi}} \right\rangle \in H_{1}^{2} \otimes H_{2}^{2}\otimes H_{3}^{2}$ be bipartite separable, for any biseparable mixture state can be written as $\rho=\sum\limits_{i}p_{i}\rho_{i}, \sum\limits_{i}p_{i}=1$. By using the above lemma 1, we take the max  imum value of the trace-norm, we get
\begin{eqnarray}
M(\rho)=\frac{1}{3}(\|\rho_{1|23}-\rho_{1|23}^{T_{2}}\|_{tr}+\|\rho_{2|13}-\rho_{2|13}^{T_{1}}\|_{tr}+\|\rho_{3|12}-\rho_{3|12}^{T_{1}}\|_{tr})~~~~~~~~~~~~~~~~~~~~~~~~~\nonumber\\=
\frac{1}{3}(\|\sum_{i}p_{i}(\rho_{1}\otimes\rho_{23}-(\rho_{1}\otimes\rho_{23})^{T_{2}})\|_{tr}+\|\sum_{i}p_{i}(\rho_{2}\otimes\rho_{13}-(\rho_{2}\otimes\rho_{13})^{T_{1}})\|_{tr}~~\nonumber\\+\|\sum_{i}p_{i}(\rho_{3}\otimes\rho_{12}-(\rho_{3}\otimes\rho_{12})^{T_{1}})\|_{tr})~~~~~~~~~~
~~~~~~~~~~~~~~~~~~~~~~~~~~~~~~~~~~~~~~~~~~\nonumber\\\leq\frac{1}{3}(\sum_{i}p_{i}\|\rho_{1|23}-\rho_{1|23}^{T_{1}}\|_{tr}+\|\rho_{2|13}-\rho_{2|13}^{T_{1}}\|_{tr}+\|\rho_{3|12}-\rho_{3|12}^{T_{1}}\|_{tr})~~~~~~~~~~~~~~~~~~\nonumber\\\leq\frac{1}{3}(\sqrt{3}+\sqrt{3}+\sqrt{3})=\sqrt{3}.~~~~~~~~~~~~~~~~~~~~~~~~~~~~~~~~~~~~~~~~~~~~~~~~~~~~~~~~~~~~~~~~~~~
\end{eqnarray}
\end{proof}

\noindent{\it Example 1:}~Consider the three-qubit Greenberger-Horne-Zeilinger (GHZ) state mixed with white noise, $\rho=\frac{x}{8}I_{8}+(1-x)|{\rm GHZ}\rangle\langle {\rm GHZ}|$, $0\leq x \leq 1$, where $|{\rm GHZ}\rangle=\frac{1}{\sqrt{2}}(|000\rangle+|111\rangle)$, $I_{8}$ the $8\times8$ identity matrix. By calculating, we have\\
$~~~~~~~~~~~M(\rho)=\frac{1}{3 }(\|\rho_{1|23}-\rho_{1|23}^{T_{2}}\|_{tr}+\|\rho_{2|13}-\rho_{2|13}^{T_{1}}\|_{tr}+\|\rho_{3|12}-\rho_{3|12}^{T_{1}}\|_{tr})\\~~~~~~~~~~~~~~~~~~=\frac{1}{3}[3\times(2-2x)]=2-2x$.

Set $f_{1}(x)=M(\rho)-\sqrt{3}$. Theorem 1 says that $\rho$ is a genuine tripartite entangled state when $f_{1}(x)>0$, which happens for $0\leq x<\frac{2-\sqrt{3}}{2}\approx 0.134$.

In Ref. 22, $C_{3}(\rho_{GHZ})=\frac{1}{2}\sqrt{6-25x+\frac{25}{2}x^{2}}$ and set $f_{2}(x)=C_{3}(\rho_{GHZ})-\frac{1}{2}\sqrt{6-25x+\frac{25}{2}x^{2}}$ the lower bound of $f_{2}(x)\geq0$ is used to 
detect genuinely multipartite entangled for $0\leq x < 0.08349$. The comparison is shown in Fig.1, where our result (Theorem 1) is able to detect more genuine entangled states.

\begin{figure}[!htb]
\centerline{\includegraphics[width=1.2\textwidth]{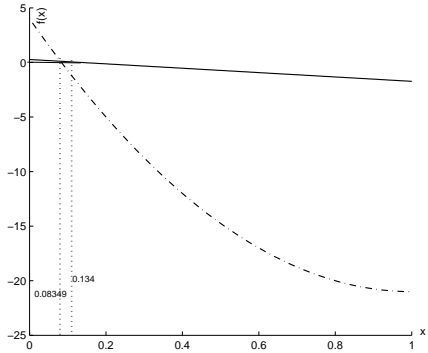}}
\begin{minipage}[t]{16cm}
\caption{Detect genuine entanglement of tripartite qubit quantum states.}
\end{minipage}\par\vglue8pt
\end{figure}

By Theorem 1, when $f_{1}(x)>0$, $\rho$ is a genuine tripartite entangled state for $0\leq x<0.134$
(solid line), while using Theorem 1 in Ref. 22, when $f_{2}(x)\geq0$, $\rho$ is a genuine tripartite entangled state for $0\leq x < 0.08349$(dash-dot line).
\section{Detection of GME for tripartite qudit quantum states}

 In this section we use the norm of correlation tensors to study GME for tripartite qudit quantum states  $\rho$ over $H_{1}^{d}\otimes H_{2}^{d}\otimes H_{3}^{d}$ $(d\geq3)$.

 A pure state $|\psi_{\alpha}\rangle\in H_{1}^{d}\otimes H_{2}^{d}\otimes H_{3}^{d}$ $(d\geq3)$ 
 is said to be biseparable if it can be written as $|\psi_{\alpha}\rangle=|\varphi_{\alpha}^{j}\rangle\otimes|\varphi_{\alpha}^{kl}\rangle$, where $j\neq k\neq l\in\{1,2,3\}$ and $ k< l$. If so, we can write $|\varphi_{\alpha}^{j}\rangle\langle\varphi_{\alpha}^{j}|=\frac{1}{d}I+\frac{1}{2}\sum\limits_{i_{j}=1}^{d^{2}-1}t_{i_{j}}^{j}\lambda_{i_{j}}^{(j)}$and $|\varphi_{\alpha}^{kl}\rangle\langle\varphi_{\alpha}^{kl}|=\frac{1}{d^{2}}I\otimes I+\frac{1}{2d}(\sum\limits_{i_{k}=1}^{d^{2}-1}t_{i_{k}}^{k}\lambda_{i_{k}}^{(k)}\otimes I+\sum\limits_{i_{l}=1}^{d^{2}-1}t_{i_{l}}^{l}I \otimes\lambda_{i_{l}}^{(l)})+\frac{1}{4}\sum\limits_{i_{k}=1}^{d^{2}-1}\sum\limits_{i_{l}=1}^{d^{2}-1}t_{i_{k}i_{l}}^{kl}\lambda_{i_{k}}^{(k)}\otimes\lambda_{i_{l}}^{(l)}$. Thus we can define the constructing matrices with respect to $|\psi_{\alpha}\rangle\langle\psi_{\alpha}|-(|\psi_{\alpha}\rangle\langle\psi_{\alpha}|)^{T_{1}}$.

For the bipartition $1|23$, let\\
\begin{eqnarray}
~~~~~~~~~~~~~~~~~N_{1|23}=\left(\begin{array}{cccccccc}
                                         1 & \frac{1}{2}(T^{(2)})^{t} & \frac{1}{2}(T^{(3)})^{t} & \frac{1}{2}(T^{(23)})^{t} \\
                                         T^{(\tilde{1})} & \frac{1}{2}T^{(\tilde{1}2)} & \frac{1}{2}T^{(\tilde{1}3)} & \frac{1}{2}T^{(\tilde{1}23)}
                                         \end{array}
                                         \right)~~~~~~~~~~~~~\nonumber\\=\left(\begin{array}{c}
                                                   1 \\
                                                   T^{(\tilde{1})}
                                                 \end{array}
                                                 \right)\left(\begin{array}{cccc}
                                                          1 & \frac{1}{2}(T^{(2)})^{t} & \frac{1}{2}(T^{(3)})^{t} & \frac{1}{2}(T^{(23)})^{t}
                                                        \end{array}
                                                        \right),~~~~~~
\end{eqnarray}
where $T^{(\tilde{1})}, T^{(2)}, T^{(3)}, T^{(23)}$ are the column vectors with entries $t_{i_{1}}^{1}:i_{1}=d(d+1)/2$, $\cdots$, $d^{2}-1$, $t_{i_{2}}^{2}$, $t_{i_{3}}^{3}$ and $t_{i_{2}i_{3}}^{23}$, $\|T^{(\tilde{1})}\|^{2}=\sum\limits_{i_{1}=d(d+1)/2}^{d^{2}-1}(t_{i_{1}}^{1})^{2}$, $T^{(\tilde{1}2)}=T^{(\tilde{1})}(T^{(2)})^{t}$, $T^{(\tilde{1}3)}=T^{(\tilde{1})}(T^{(3)})^{t}$, $T^{(\tilde{1}23)}=T^{(\tilde{1})}(T^{(23)})^{t}$, t stands for transpose.

Similarly, for bipartition $2|13$, let
\begin{eqnarray}
G_{2|13}=\left(\begin{array}{ccc}
                                       1 & \frac{1}{2}(T^{(\tilde{1})})^{t} &  \frac{1}{2}(T^{(\tilde{1}3)})^{t} \\
                                       T^{(2)} & \frac{1}{2}T^{(2\tilde{1})} & \frac{1}{2}T^{(2\tilde{1}3)}
                                     \end{array}
                                     \right)~~~~~~~~~~~\nonumber\\=\left(\begin{array}{c}
                                                   1 \\
                                                   T^{(2)}
                                                 \end{array}
                                                 \right)\left(\begin{array}{ccc}
                                                                1 & \frac{1}{2}(T^{(\tilde{1})})^{t} & \frac{1}{2}(T^{(\tilde{1}3)})^{t}
                                                              \end{array}
                                                              \right),~~~
\end{eqnarray}
where $T^{(2\tilde{1})}=T^{(2)}(T^{(\tilde{1})})^{t}$, $T^{(2\tilde{1}3)}=T^{(2)}(T^{(\tilde{1}3)})^{t}$.

For bipartition $3|12$, let
\begin{eqnarray}
S_{3|12}=\left(\begin{array}{ccc}
                                       1 & \frac{1}{2}(T^{(\tilde{1})})^{t} &  \frac{1}{2}(T^{(\tilde{1}2)})^{t} \\
                                       T^{(3)} & \frac{1}{2}T^{(3\tilde{1})} & \frac{1}{2}T^{(3\tilde{1}2)}
                                     \end{array}
                                     \right),~~~~~~~~~
\end{eqnarray}
where $T^{(3\tilde{1})}=T^{(3)}(T^{(\tilde{1})})^{t}$, $T^{(3\tilde{1}2)}=T^{(3)}(T^{(\tilde{1}2)})^{t}$.

By using these constructing matrices and the inequalities for 1-body correlation tensors $\|T^{(j)}\|^{2}\leq\frac{2(d-1)}{d}(j=1,2,3)$ and 2-body correlation tensors $\|T^{(ij)}\|^{2}\leq\frac{4(d^{2}-1)}{d^{2}}(i,j=1,2,3)$, we get the following lemma.

\begin{lemma}
\label{lemma2}
For any bipartite separable quantum pure state $\rho$ over $H_{1}^{d}\otimes H_{2}^{d}\otimes H_{3}^{d}$ $(d\geq3)$, we have\\
(a) If $\rho$ is separable under bipartition $1|23$, then $\|N_{1|23}\|_{tr}\leq \sqrt{\frac{3d^{3}+4d^{2}-7d+2}{2d}}$;\\
(b) If $\rho$ is separable under bipartition $2|13$, then $\|G_{2|13}\|_{tr}\leq \sqrt{\frac{15d^{3}-13d^{2}-4d+4}{2d^{3}}}$;\\
(c) If $\rho$ is separable under bipartition $3|12$, then $\|S_{3|12}\|_{tr}\leq \sqrt{\frac{15d^{3}-13d^{2}-4d+4}{2d^{3}}}$.
\end{lemma}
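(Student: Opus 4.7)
The plan is to exploit the explicit rank-one factorisation of each of the three matrices, which is essentially already displayed in the excerpt following the definitions of $N_{1|23}$ and $G_{2|13}$. The key observation is that for a pure biseparable state $|\psi\rangle = |\varphi^{f}\rangle \otimes |\varphi^{gh}\rangle$, the multi-body correlation tensors decouple into outer products of the single-factor tensors. For part~(a), with $|\psi\rangle = |\varphi^{1}\rangle \otimes |\varphi^{23}\rangle$, one checks directly from the definition of the $t$'s that $T^{(\tilde{1}2)} = T^{(\tilde{1})}(T^{(2)})^{t}$, $T^{(\tilde{1}3)} = T^{(\tilde{1})}(T^{(3)})^{t}$, and $T^{(\tilde{1}23)} = T^{(\tilde{1})}(T^{(23)})^{t}$, where $T^{(23)}$ is vectorised into a row. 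Substituted into the block form, these identities collapse $N_{1|23}$ into the outer product $u v^{t}$ with $u = (1,\,T^{(\tilde{1})})^{t}$ and $v^{t} = (1,\,\tfrac{1}{2}(T^{(2)})^{t},\,\tfrac{1}{2}(T^{(3)})^{t},\,\tfrac{1}{2}(T^{(23)})^{t})$. Analogous rank-one factorisations for $G_{2|13}$ and $S_{3|12}$ follow from the corresponding biseparability.

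Once the rank-one structure is in hand, the trace norm of the matrix equals $\|u\|_{2}\|v\|_{2}$, and the problem reduces to bounding two Euclidean norms of vectors built from $1$- and $2$-body correlation tensors. For part~(a), $\|u\|_{2}^{2} = 1 + \|T^{(\tilde{1})}\|^{2}$ and $\|v\|_{2}^{2} = 1 + \tfrac{1}{4}(\|T^{(2)}\|^{2} + \|T^{(3)}\|^{2} + \|T^{(23)}\|^{2})$. I would then apply the $1$-body bound $\|T^{(j)}\|^{2} \le 2(d-1)/d$, using in particular that $\|T^{(\tilde{1})}\|^{2} \le \|T^{(1)}\|^{2}$ since the tilde only restricts the index range, together with the $2$-body bound $\|T^{(23)}\|^{2} \le 4(d^{2}-1)/d^{2}$. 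Multiplying the resulting upper bounds on $\|u\|_{2}^{2}$ and $\|v\|_{2}^{2}$ and simplifying over the common denominator should produce the stated rational function $(3d^{3}+4d^{2}-7d+2)/(2d)$.

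Parts~(b) and~(c) follow the same template, with the isolated factor now being $|\varphi^{2}\rangle$ or $|\varphi^{3}\rangle$, so that $\|u\|_{2}^{2} = 1 + \|T^{(2)}\|^{2}$ or $1 + \|T^{(3)}\|^{2}$, and the vector $v$ now contains $T^{(\tilde{1})}$ together with a single $2$-body tensor $T^{(\tilde{1}3)}$ or $T^{(\tilde{1}2)}$. The relevant $2$-body input is $\|T^{(\tilde{1}k)}\|^{2} \le \|T^{(1k)}\|^{2} \le 4(d^{2}-1)/d^{2}$, and the symmetry between the two cuts $2|13$ and $3|12$ explains the identical right-hand side $(15d^{3}-13d^{2}-4d+4)/(2d^{3})$. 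The main obstacle I anticipate is not conceptual but algebraic: one must verify carefully that deleting rows to pass from $T^{(1)}$ to $T^{(\tilde{1})}$ preserves all outer-product identities in the correct block shape (it does, since the operation only removes rows without touching columns), and then carry out the polynomial simplification that produces the exact rational functions of $d$ stated in the three parts.
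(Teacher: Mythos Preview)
Your overall strategy---rank-one factorisation followed by the identity $\|uv^{t}\|_{tr}=\|u\|_{2}\|v\|_{2}$ and then the known $1$- and $2$-body tensor bounds---is exactly what the paper does, and for parts~(b) and~(c) your computation reproduces the stated bound $(15d^{3}-13d^{2}-4d+4)/(2d^{3})$ verbatim.

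For part~(a), however, your expectation that independently bounding $\|T^{(2)}\|^{2}$, $\|T^{(3)}\|^{2}$, $\|T^{(23)}\|^{2}$ will ``simplify to the stated rational function'' is incorrect. Plugging in $\|T^{(j)}\|^{2}\le 2(d-1)/d$ and $\|T^{(23)}\|^{2}\le 4(d^{2}-1)/d^{2}$ yields
\[
\|N_{1|23}\|_{tr}^{2}\ \le\ \frac{3d-2}{d}\cdot\frac{3d^{2}-d-1}{d^{2}}\ =\ \frac{9d^{3}-9d^{2}-d+2}{d^{3}},
\]
which is not the bound in the lemma. The paper instead invokes the purity of the $23$-factor, $\mathrm{tr}[(\rho^{23})^{2}]=1$, to obtain the exact relation $\|T^{(2)}\|^{2}+\|T^{(3)}\|^{2}=2d-\tfrac{2}{d}-\tfrac{d}{2}\|T^{(23)}\|^{2}$, substitutes it, and then maximises over $\|T^{(23)}\|^{2}\ge 0$ (the coefficient $\tfrac14-\tfrac d8$ being negative for $d\ge 3$); this is what produces the numerator $3d^{3}+4d^{2}-7d+2$. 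Your bound is in fact sharper than the paper's for every $d\ge 3$, so the lemma's inequality still follows a fortiori from your argument---only your anticipated algebraic outcome for part~(a) needs correcting.
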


\begin{proof}
(a) For a bipartite separable pure state $|\psi_{\alpha}\rangle$ under bipartition $1|23$. By Equation (6) it follows that
\begin{eqnarray}
\|N_{1|23}\|_{tr}=\|\left(\begin{array}{c}
                                                   1 \\
                                                   T^{(\tilde{1})}
                                                 \end{array}
                                                 \right)\left(\begin{array}{cccc}
                                                          1 & \frac{1}{2}(T^{(2)})^{t} & \frac{1}{2}(T^{(3)})^{t} & \frac{1}{2}(T^{(23)})^{t}
                                                        \end{array}
                                                        \right)\|_{tr}~~~~~~~~~~~~~~\nonumber\\=\|\left(\begin{array}{c}
                                                   1 \\
                                                   T^{(\tilde{1})}
                                                 \end{array}
                                                 \right)\|\cdot \|\left(\begin{array}{cccc}
                                                          1 & \frac{1}{2}(T^{(2)})^{t} & \frac{1}{2}(T^{(3)})^{t} & \frac{1}{2}(T^{(23)})^{t}
                                                        \end{array}
                                                        \right)\|~~~~~~~~~~~\nonumber\\
=\sqrt{1+\|T^{(\tilde{1})}\|^{2}}\sqrt{1+\frac{1}{4}\|T^{(2)}\|^{2}+\frac{1}{4}\|T^{(3)}\|^{2}+\frac{1}{4}\|T^{(23)}\|^{2}}~~~\nonumber\\\leq \sqrt{1+\|T^{(1)}\|^{2}}\sqrt{1+\frac{1}{4}(2d-\frac{2}{d}-\frac{d}{2}\|T^{(23)}\|^{2})+\frac{1}{4}\|T^{(23)}\|^{2}}\nonumber\\\leq \sqrt{\frac{3d^{3}+4d^{2}-7d+2}{2d}},~~~~~~~~~~~~~~~~~~~~~~~~~~~~~~~~~~~~~~~~~~~~
\end{eqnarray}
where we have used $\||a\rangle\langle b|\|_{tr}=\||a\rangle\|\||b\rangle\|$ for vectors $|a\rangle$ and $|b\rangle$ and $tr(\rho_{1|23}^{2})=tr[(\rho^{1})^{2}]=tr[(\rho^{23})^{2}]=1$.


(b) For a bipartite separable pure state $|\psi_{\beta}\rangle=|\varphi_{\beta}^{2}\rangle\otimes|\varphi_{\beta}^{13}\rangle$,
 Equation (7) implies that\\
$~~~~~~~~~~~~~\|G_{2|13}\|_{tr}=\|\left(\begin{array}{c}
                                                   1 \\
                                                   T^{(2)}
                                                 \end{array}
                                                 \right)\left(\begin{array}{cccc}
                                                         1 & \frac{1}{2}(T^{(\tilde{1})})^{t} & \frac{1}{2}(T^{(\tilde{1}3)})^{t}
                                                        \end{array}
                                                        \right)\|_{tr}\\~~~~~~~~~~~~~~~~~~~~~~~~~=\|\left(\begin{array}{c}
                                                   1 \\
                                                   T^{(2)}
                                                 \end{array}
                                                 \right)\|\cdot\|\left(\begin{array}{ccc}
                                                         1 & \frac{1}{2}(T^{(\tilde{1})})^{t} & \frac{1}{2}(T^{(\tilde{1}3)})^{t}
                                                        \end{array}
                                                        \right)\|\\~~~~~~~~~~~~~~~~~~~~~~~~~=\sqrt{1+\|T^{(2)}\|^{2}}\sqrt{1+\frac{1}{4}\|T^{(\tilde{1})}\|^{2}+\frac{1}{4}\|T^{(\tilde{1}3)}\|^{2}}\\~~~~~~~~~~~~~~~~~~~~~~~~~\leq \sqrt{1+\|T^{(2)}\|^{2}}\sqrt{1+\frac{1}{4}\|T^{(1)}\|^{2}+\frac{1}{4}\|T^{(13)}\|^{2}}$\begin{eqnarray}\leq \sqrt{\frac{15d^{3}-13d^{2}-4d+4}{2d^{3}}}.~~~~~~~~~~~~~~~~~~~~~
\end{eqnarray}


(c) Using the similar method, we obtain $\|S_{3|12}\|_{tr}\leq \sqrt{\frac{15d^{3}-13d^{2}-4d+4}{2d^{3}}}$.
\end{proof}

To detect the genuine tripartite entanglement, we introduce the function $M_{1}(\rho)=\frac{1}{3}(\|N_{1|23}\|_{tr}+\|G_{2|13}\|_{tr}+\|S_{3|12}\|_{tr})$ for any tripartite qudit quantum state $\rho$ over
$H_{1}^{d}\otimes H_{2}^{d}\otimes H_{3}^{d}(d\geq3)$. Then by Lemma $2$,  we set $M_{1}=max\{\sqrt{\frac{3d^{3}+4d^{2}-7d+2}{2d}},\sqrt{\frac{15d^{3}-13d^{2}-4d+4}{2d^{3}}}\}$, then have the following theorem.
\begin{theorem}\label{theo1}
For any tripartite qudit quantum state $\rho$ over $H_{1}^{d}\otimes H_{2}^{d}\otimes H_{3}^{d}$ $(d\geq3)$, if $\rho$ is bipartite separable, then the inequality $M_{1}(\rho)\leq\frac{1}{3}(M_{1}+M_{1}+M_{1})=M_{1}$ holds. Thus if $M_{1}(\rho)>M_{1}$, $\rho$ is genuinely tripartite entangled.
\end{theorem}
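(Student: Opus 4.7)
The plan is to mirror the structure of the proof of Theorem 1, replacing the partial-transposition trace norms by the trace norms of the constructing matrices $N_{1|23}$, $G_{2|13}$, and $S_{3|12}$, and invoking Lemma 2 in place of Lemma 1. The first step is to write an arbitrary biseparable state as a convex combination of biseparable pure states, $\rho=\sum_i p_i|\psi_i\rangle\langle\psi_i|$ with $\sum_i p_i=1$. Because the entries of each constructing matrix are (affine-) linear functions of the correlation-tensor components $t^{j}_{i_j}, t^{jk}_{i_j i_k}, t^{123}_{i_1 i_2 i_3}$, which are themselves linear in $\rho$, each of $N_{1|23}$, $G_{2|13}$, $S_{3|12}$ depends linearly on $\rho$. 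Hence $N_{1|23}(\rho)=\sum_i p_i\,N_{1|23}(|\psi_i\rangle\langle\psi_i|)$ and similarly for the other two matrices.

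The second step is to apply the triangle inequality for the trace norm, giving $\|N_{1|23}(\rho)\|_{tr}\le \sum_i p_i\|N_{1|23}(|\psi_i\rangle\langle\psi_i|)\|_{tr}$ and analogously for $G_{2|13}(\rho)$ and $S_{3|12}(\rho)$. Since $M_{1}$ is defined as the maximum of the two bounds appearing in Lemma 2, each pure-state trace norm is bounded by $M_{1}$, and convexity in $p_i$ together with $\sum_i p_i=1$ yields $\|N_{1|23}(\rho)\|_{tr}\le M_{1}$, $\|G_{2|13}(\rho)\|_{tr}\le M_{1}$, and $\|S_{3|12}(\rho)\|_{tr}\le M_{1}$. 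Averaging the three inequalities produces the desired $M_{1}(\rho)\le \frac{1}{3}(M_{1}+M_{1}+M_{1})=M_{1}$; the contrapositive is exactly the GME criterion stated in the theorem.

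The main obstacle, which is also latent in Theorem 1, is that when a pure component $|\psi_i\rangle$ factors under, say, bipartition $2|13$, Lemma 2(a) does not literally apply to $\|N_{1|23}(|\psi_i\rangle\langle\psi_i|)\|_{tr}$, because the rank-one factorization used in that lemma presupposes a matching bipartition. I would handle this in one of two ways. Either (i) sort the sum $\sum_i p_i$ according to the bipartition under which each $|\psi_i\rangle$ separates, so that within each subsum the pairing between the constructing matrix and the bipartition is "matched" and Lemma 2 applies termwise; or (ii) extend Lemma 2 by redoing its Cauchy--Schwarz-type estimate $\|ab^{t}\|_{tr}=\|a\|\cdot\|b\|$ using only the universal pure-state bounds $\|T^{(j)}\|^{2}\le \frac{2(d-1)}{d}$ and $\|T^{(ij)}\|^{2}\le \frac{4(d^{2}-1)}{d^{2}}$, and verifying that the resulting uniform estimate still lies below $M_{1}$. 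Either route gives the clean convex-combination argument the theorem needs.
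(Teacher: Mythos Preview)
Your skeleton---convex decomposition into biseparable pure states, affine-linearity of $N_{1|23},G_{2|13},S_{3|12}$ in $\rho$, triangle inequality for $\|\cdot\|_{tr}$, then Lemma~2---is exactly the paper's argument. The paper's displayed chain of inequalities is the same computation, written more tersely (and with some stray partial-transposition notation carried over from Section~2).

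You are also right to isolate the mismatched-bipartition issue; the paper glosses over it. However, your option~(i) does not work, and it is essentially what the paper tacitly does when it writes three separate sums with weights $q_i,r_i,s_i$: the matrix $N_{1|23}(\rho)$ is built from \emph{all} of $\rho$, so after the triangle inequality you still face terms $\|N_{1|23}(|\psi_i\rangle\langle\psi_i|)\|_{tr}$ for pure $|\psi_i\rangle$ that factor under $2|13$ or $3|12$, and for those $N_{1|23}$ is not rank one and Lemma~2(a) does not literally apply. Sorting the sum cannot make these terms disappear.

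Thus only your option~(ii) can close the gap: one must show directly that for \emph{every} biseparable pure state, each of the three trace norms is bounded by $M_1$. Since for $d\ge 3$ the first bound in Lemma~2 dominates, $M_1=\sqrt{\tfrac{3d^3+4d^2-7d+2}{2d}}$, and the task reduces to bounding, say, $\|N_{1|23}\|_{tr}$ for an arbitrary pure state using only the universal inequalities $\|T^{(j)}\|^2\le\tfrac{2(d-1)}{d}$, $\|T^{(ij)}\|^2\le\tfrac{4(d^2-1)}{d^2}$ (and possibly a bound on $\|T^{(123)}\|$). Neither your sketch nor the paper carries this out, so this is the step you would actually need to supply to make the proof complete.
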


\begin{proof}
Assume that $\left| {{\psi_{\alpha}}} \right\rangle \in H_{1}^{d} \otimes H_{2}^{d}\otimes H_{3}^{d}$ be bipartite separable, which will be the following forms:  $|\psi_{\alpha}^{j\mid kl}\rangle=|\varphi_{\alpha}^{j}\rangle\otimes|\varphi_{\alpha}^{kl}\rangle$, where $j\neq k\neq l\in\{1,2,3\}$ and $ k< l$. By using the above lemma 2, we get
\begin{eqnarray}
M(\rho)=\frac{1}{3}(\|N_{1|23}\|_{tr}+\|G_{2|13}\|_{tr}+\|S_{3|12}\|_{tr})~~~~~~~~~~~~~~~~~~~~~~~~~~~~~~~~~~~~~~~\nonumber\\=
\frac{1}{3}(\|\sum\limits_i q_i N_{1|23}(\rho_{1}^{i}\otimes\rho_{23}^{i}-(\rho_{1}^{i}\otimes\rho_{23}^{i})^{T_{1}})\|_{tr}+\|\sum\limits_i r_i G_{2|13}(\rho_{2}^{i}\otimes\rho_{13}^{i}\nonumber\\-(\rho_{2}^{i}\otimes\rho_{13}^{i})^{T_{1}})\|_{tr}+\|\sum\limits_i s_i S_{3|12}(\rho_{3}^{i}\otimes\rho_{12}^{i}-(\rho_{3}^{i}\otimes\rho_{12}^{i})^{T_{1}})\|_{tr}~~~~~~~~~\nonumber\\
\leq\frac{1}{3}(\sum\limits_i q_i\|N_{1|23}\|_{tr}+\sum\limits_i r_i\|G_{2|13}\|_{tr}+\sum\limits_i s_i\|S_{3|12}\|_{tr})~~~~~~~~~~~~~~~~~~\nonumber\\
\leq\frac{1}{3}(M_{1}+M_{1}+M_{1})=M_{1}.~~~~~~~~~~~~~~~~~~~~~~~~~~~~~~~~~~~~~~~~~~~~~~~~~~~~~
\end{eqnarray}
\end{proof}

We consider a special quantum state, if a density matrix is permutational invariant, we can get the following corollary:

\begin{corollary}
If a density matrix is permutational invariant, then $M_{1}(\rho)=\frac{1}{3}(\|N_{1|23}\|_{tr}+\|G_{2|13}\|_{tr}+\|S_{3|12}\|_{tr})\leq\frac{1}{3}(\sqrt{\frac{3d^{3}+4d^{2}-7d+2}{2d}}+2\sqrt{\frac{15d^{3}-13d^{2}-4d+4}{2d^{3}}})$. Thus if $M(\rho)>\frac{1}{3}(\sqrt{\frac{3d^{3}+4d^{2}-7d+2}{2d}}+2\sqrt{\frac{15d^{3}-13d^{2}-4d+4}{2d^{3}}})$, $\rho$ is a genuinely entangled tripartite state.
\end{corollary}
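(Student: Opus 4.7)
The plan is to treat the three terms in $M_{1}(\rho)$ separately, applying the matching part of Lemma~\ref{lemma2} to each. The corollary's bound differs from that of Theorem~\ref{theo1} precisely because it replaces the uniform upper bound $M_{1}=\max\{B_{1},B_{2}\}$ with the triple $(B_{1},B_{2},B_{2})$ where $B_{1}=\sqrt{(3d^{3}+4d^{2}-7d+2)/(2d)}$ and $B_{2}=\sqrt{(15d^{3}-13d^{2}-4d+4)/(2d^{3})}$. So the task is to show that permutational invariance is exactly the structural hypothesis that lets us use the bipartition-specific bound for each of the three matrices $N_{1|23}$, $G_{2|13}$, $S_{3|12}$ rather than the worst of the three.

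Concretely, I would first write $\rho=\sum_{i}p_{i}|\psi_{i}\rangle\langle\psi_{i}|$ as a biseparable decomposition, then symmetrize over the symmetric group $S_{3}$ acting on the three tensor factors,
\begin{equation*}
\rho \;=\; \frac{1}{|S_{3}|}\sum_{\sigma\in S_{3}}\sigma(\rho) \;=\; \frac{1}{6}\sum_{\sigma\in S_{3}}\sum_{i}p_{i}\,\sigma\!\left(|\psi_{i}\rangle\langle\psi_{i}|\right),
\end{equation*}
which is legal because $\sigma(\rho)=\rho$ by hypothesis. Since $\sigma$ sends a pure state biseparable under $j|kl$ to one biseparable under $\sigma(j)|\sigma(k)\sigma(l)$, the symmetrized decomposition contains pure states biseparable under each of the three bipartitions $1|23$, $2|13$, $3|12$ with the same total weight. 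Next I would use the fact that the entries of $N_{1|23}$, $G_{2|13}$, $S_{3|12}$ are affine functions of the correlation tensors and hence affine in $\rho$, so convexity of $\|\cdot\|_{tr}$ yields $\|N_{1|23}(\rho)\|_{tr}\leq\sum_{i}p_{i}\|N_{1|23}(\rho_{i})\|_{tr}$, and likewise for the other two. Lemma~\ref{lemma2}(a) then bounds $\|N_{1|23}\|_{tr}$ by $B_{1}$ on the $1|23$ pieces; the key observation is that by the symmetrized structure, the $2|13$ and $3|12$ pieces can be rewritten, after the appropriate permutation, as $1|23$ pieces of a different permutation of $\rho$, and permutational invariance $\sigma(\rho)=\rho$ collapses these back to the same $N_{1|23}(\rho)$. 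Analogous arguments give $\|G_{2|13}(\rho)\|_{tr}\leq B_{2}$ and $\|S_{3|12}(\rho)\|_{tr}\leq B_{2}$ using parts (b) and (c) of Lemma~\ref{lemma2}. Summing and dividing by three produces the stated estimate, and the second assertion (genuine tripartite entanglement when $M_{1}(\rho)$ exceeds this bound) is immediate by contraposition.

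The hard part is the symmetry-tracking step, because the three matrices $N$, $G$, $S$ are structurally asymmetric: each of them singles out system~$1$ via the restricted tensor $T^{(\tilde{1})}$ and via partial transposition $T_{1}$. Consequently the identity $\|N_{1|23}(\sigma(\rho))\|_{tr}=\|G_{2|13}(\rho)\|_{tr}$ (and the analogue for $S_{3|12}$) is not a purely cosmetic relabeling; one must check carefully that under a permutation of subsystems the restricted antisymmetric block and the partial transpose transform consistently so that the norm identity really holds. Once this is verified the rest is routine bookkeeping, and the combination with Lemma~\ref{lemma2}'s bounds yields the claimed inequality $M_{1}(\rho)\leq\frac{1}{3}\bigl(B_{1}+2B_{2}\bigr)$.
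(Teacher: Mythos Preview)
The paper states this corollary without proof, treating it as immediate from Lemma~\ref{lemma2} and Theorem~2. Your proposal is therefore more detailed than anything the paper offers, but the step you yourself flag as ``the hard part'' is a genuine gap, not just bookkeeping.

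After symmetrizing the biseparable decomposition and applying convexity, you still need to bound $\|N_{1|23}(|\psi\rangle\langle\psi|)\|_{tr}$ for pure states $|\psi\rangle$ separable under $2|13$ or $3|12$, and Lemma~\ref{lemma2}(a) says nothing about those. Your proposed workaround---permute such a $|\psi\rangle$ to a $1|23$-separable state and then invoke $\sigma(\rho)=\rho$---conflates two different objects: permutational invariance is a property of the mixed state $\rho$, not of the individual pure components, so $\sigma(|\psi\rangle\langle\psi|)\neq|\psi\rangle\langle\psi|$ in general, and there is no reason for $\|N_{1|23}(|\psi\rangle\langle\psi|)\|_{tr}$ to equal $\|N_{1|23}(\sigma(|\psi\rangle\langle\psi|))\|_{tr}$. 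Nor can you trade $N_{1|23}$ for $G_{2|13}$ or $S_{3|12}$ under a permutation: the three matrices have genuinely different block shapes (four column blocks in $N$ versus three in $G$ and $S$), and all three single out system~$1$ through the restricted index set $\tilde{1}$, so no identity of the form $\|N_{1|23}(\sigma(\cdot))\|_{tr}=\|G_{2|13}(\cdot)\|_{tr}$ is available. Convexity plus symmetrization therefore does not reduce everything to the matching cases of Lemma~\ref{lemma2}.

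To make the argument go through you would need cross-bipartition estimates---for instance a bound on $\|N_{1|23}\|_{tr}$ for a pure state separable under $2|13$---which neither Lemma~\ref{lemma2} nor the symmetrization supplies.
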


\noindent{\it Example 2:}~Consider the three-qutrit state mixed with white noise,\\ $~~~~~~~~~~~~~~~~~~~~~~~~~~~~\rho=\frac{1-x}{27}I_{27}+x|GHZ\rangle\langle GHZ|$, $0 \leq x \leq 1$, \\where $|GHZ\rangle=\frac{1}{\sqrt{3}}(|000\rangle+|111\rangle+|222\rangle)$, $I_{27}$
the $27\times27$ identity matrix. By calculating, we have\\
$~~~~~~~~~~~~~~~~~~~~~~M_{1}(\rho)=\frac{1}{3}(\|N_{1|23}\|_{tr}+\|G_{2|13}\|_{tr}+\|S_{3|12}\|_{tr})$\\$~~~~~~~~~~~~~~~~~~~~~~~~~~~~~~=\frac{1}{3}(\sqrt{\frac{2}{9}x^{2}+1}+\sqrt{2}x+4x+2)$.\\
By Corollary 2, we set the function \\$~~~~~~~~~~~~~~~~~~~~~f_{1}(x)=M_{1}(\rho)-\frac{1}{3}(\sqrt{\frac{3d^{3}+4d^{2}-7d+2}{2d}}+2\sqrt{\frac{15d^{3}-13d^{2}-4d+4}{2d^{3}}})$, where $d=3$, then $f_{1}(x)>0$ will imply that
$\rho$ is GME, which happens when $0.708 < x \leq 1$. 

On the other hand, according to Corollary 2 of Ref. 16, another function \\$~f_{2}(x)=C_{n-1}(\rho_{2-sep})- {\rm max}[d^{n}-2d^{\frac{n}{2}}+1,(d-1)(d^{n-1}-1)+d^{n-1}-1-\frac{n}{n-1}(d^{n-3}-1)]$, \\was used to detect
GME and 
their criterion certifies this when 
$x > 0.89443$. The comparison is given in See Fig.2, which clearly shows that our Corollary 2 is stronger than the criterion given in Ref. 16.

\begin{figure}[!htb]
\centerline{\includegraphics[width=1.4\textwidth]{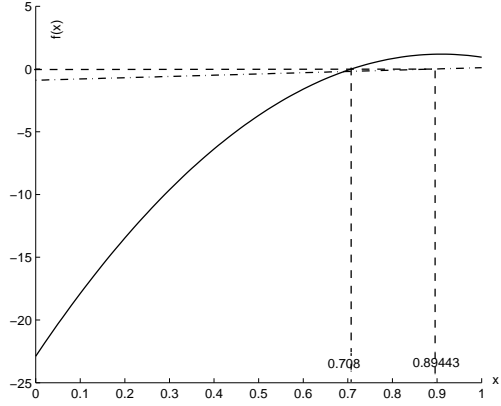}}
\begin{minipage}[t]{16cm}
\caption{Detect genuine entanglement of tripartite qudit quantum states.}
\end{minipage}\par\vglue8pt
\end{figure}
By Corollary 2, when $f_{1}(x)>0$, $\rho$ is a genuine tripartite entangled state for $0.708 < x \leq 1$(solid line), while using Corollary 2 of Ref. 16, when $f_{2}(x)>0$, $\rho$ is a genuine tripartite entangled state for $x > 0.89443$(dash-dot line).

Moreover, Ref. 12 used the lower bound of the following formular:\\$~~~~~~~~~~~~~~~~~~~~~~C_{N}(\rho)>2^{1-\frac{N}{2}}\sqrt{2^{N}-4+\frac{2}{d}-2\sum\limits_{k=1}^{\frac{N-1}{2}}\frac{N!
}{k!(N-k)!d^{k}}} $ $(N=3,d=3)$, \\to detect GME for tripartite $\rho$ at $x > 0.83485$.

 Also Theorem 5 in Ref. 23 considered the overlap between the maximal quantum mean value and the classical bound of the Clauser-Horne-Shimony-Holt (CHSH) inequality and their result was that
tripartite $\rho$ is GME when $x > 0.731621$.

 Therefore, for $d = 3$, our Corollary 2 detects more genuine tripartite entangled states than those of Refs. 12, 16 and 23. However, for $d = 4$, our result is not superior than that of Ref. 23.

\section{Conclusion}
We have studied the genuine multipartite entanglement for general tripartite systems based on partial transposition and the norms of
correlation tensors. We obtain sufficient conditions of the genuine entanglement for general tripartite qubit quantum states and have
derived the GME criterion under bipartition by constructing a matrix for tripartite qudit quantum states. Using detailed examples we have shown that our criteria detect more genuine entanglement than previous studies. Genuine multipartite entanglement plays a significant role in many quantum information processing. Our approach and results may be helpful for the further research on genuine multipartite entanglement.

\section*{Acknowledgments}

This work is supported by the National Natural Science Foundation of China under Grant Nos. 11101017, 11531004, 11726016, 11675113 and 11772007, and Simons Foundation under Grant No. 523868 and Beijing Natural Science Foundation under Grant No. Z180005.

\end{document}